\newtheorem{lemma}{Lemma}
\newtheorem{thm}{Theorem}
\newtheorem{coro}{Corollary}
\newtheorem{prop}{Proposition}
\theoremstyle{definition}
\theoremstyle{remark}
\newtheorem{rem}{Remark}
\def\delequal{\mathrel{\ensurestackMath{\stackon[1pt]{=}{\scriptstyle\Delta}}}}
\title{\LARGE\bf A (Strongly) Connected Weighted Graph is Uniformly Detectable based on any Output Node}
\author{Uduak Inyang-Udoh, Michael Shanks, and Neera Jain
\thanks{*This work is supported by the U.S. Office of Naval Research Thermal Science and Engineering Program under contract number N00014-21-1-2352.}
\thanks{The authors are with the School of Mechanical Engineering, Purdue University, West Lafayette, Indiana 47907 USA
        {\tt\small uinyangu@purdue.edu, shanks5@purdue.edu,
        neerajain@purdue.edu}}
}
\begin{document}

\maketitle
\thispagestyle{empty}
\pagestyle{empty}

\begin{abstract}
Many dynamical systems, including thermal, fluid, and multi-agent systems, can be represented as weighted graphs. In this paper we consider whether the unstable states of such systems can be observed from limited discrete-time measurement, that is, whether the discrete formulation of system is detectable. We establish that if the associated graph is fully connected, then a linear time invariant system is detectable by measuring any state. Further, we show that a parameter-varying or time-varying system remains uniformly detectable for a reasonable approximation of its state transition matrix, if the underlying graph structure is maintained.

\end{abstract}


\section{Introduction}
A key challenge in systems control is ascertaining whether the states of a system that cannot be measured can be retrieved, or \emph {observed} \cite{chen_linear_2014}. When a system is unobservable, or it is difficult to establish \textit{observability}, it is often satisfactory that the system is \textit{detectable}, that is, (unmeasured) states that cannot be retrieved are stable \cite{wonham_linear_2012,anderson_detectability_1981}. A dual problem relates to whether states that cannot directly receive an input can be \emph{controlled} by input(s) from other states, or whether uncontrollable states are stable (\textit{stabilizable}). The question about these system properties are often analyzed on a system-by-system basis. However, many dynamical systems---including thermal, fluid and multi-agent systems---can be modeled as graphs \cite{Aksland,Inyang-Udoh2021, olfati-saber_consensus_2007, mesbahi_graph_2010}. As such, by using the tools of graph theory, we can more generally analyze the control properties of these systems. This is particularly relevant in time-varying systems or nonlinear systems where the systems' parameters may change with time but the underlying graph structure remains intact \cite{liu_controllability_2011}. 

The question of whether a given graph is observable or controllable in the classical sense \cite{kalman_general_1960, hautus_stabilization_1970} has been studied for different graph topologies \cite{montanari_observability_2020}. For example, a path graph has been shown to be observable from either end node and from certain other interior nodes depending on its spectrum \cite{parlangeli_observability_2010}. On the other hand, a cycle graph may only be observed from a pair of nodes, and an unweighted complete graph is uncontrollable and unobservable \cite{tanner_controllability_2004} from any single node. Further, it has been demonstrated that for an undirected unweighted grid graph (or lattice), the observation nodes (that is, nodes from which the graph can be observed) depend on the spectrum of its constituent path graphs \cite{notarstefano_observability_2011, notarstefano_controllability_2013}. Indeed, in general, a graph obtained by \emph{joining} several graphs is observable from any single node if the eigenvectors of all constituent graphs (adjacency matrices) are mutually distinct and \emph{main} \cite{cvetkovic_controllable_2011,andelic_laplacian_2020, rowlinson_main_2007, cvetkovic1980spectra}.

The literature cited here only provides generalizations about the observability or controllability of a dynamic system if the system's graph has a simple topology (e.g.\ a path graph), or, for slightly more complex topologies, has unweighted undirected edges  (e.g.\ a lattice). However, most system graphs are arbitrarily weighted or complex \cite{liu_controllability_2011}. To address this, the concepts of \emph{structural observability} (and \emph{structural controllability}) have been adopted to account for when a certain realization of the graph structure (combination of edge weights) is observable (or controllable) \cite{lin_structural_1974, shields_structural_1976, dion_generic_2003, li_resilient_2019}. A graph that is structurally observable is observable for most weight combinations. However, this is not so if certain weights are fixed \cite{lin_structural_1974}.  

Given the difficultly in establishing observability or controllabilty for a system's graph representation, we note that the properties of detectability and stabilizability  are sufficient to establish the existence of a stable observer and controller \cite{anderson_detectability_1981}. Thus, in this paper, we generalize detectability for systems whose graphs are strongly connected. Such systems include those whose dynamical equations obey conservation laws, such as those of mass and energy. First, we consider linear time-invariant (LTI) systems with discrete-time measurements whose graph Laplacians are weighted and demonstrate that the systems are detectable provided the output matrix does not have a zero row sum. We then show that this result applies to parameter-varying systems as long as the graph structure remains fully connected. 

The paper is organized as follows. In Section \ref{sec-prob_statement}, we describe motivating problems that make graph detectability pertinent. Concepts used in the rest of the paper are introduced in Section \ref{sec-preliminaries}. We present the main results in Section \ref{sec-analysis} and draw conclusions in Section \ref{sec-conclusion}.  

\section{Problem Motivation} \label{sec-prob_statement}
Many autonomous systems are governed by an equation of the form
\begin{align}
    \frac{\partial^{n_l}\psi}{\partial^{n_l}t} = \nabla \cdot \left(c(\psi)\nabla^{n_r} \psi\right),
\end{align}
where $\psi$ is the state of the system;  $\nabla$ is the differential operator del; and $c(\psi)$ is a parameter of appropriate size, equal to constant $c$ for a linear system. For example, 
\begin{enumerate}
    \item if $n_l = 1$, $n_r = 0$, and $c(\psi)$ is a velocity vector field, the equation represents an advection which govern the dynamics in many fluid transport systems \cite{BRIO2010175};
    \item if $n_l = 1$, $n_r = 1$, and $c(\psi)$ denotes the diffusion coefficient, the diffusion equation, which is essential in diverse kinds of transport systems, results \cite{crank_mathematics_1979}; and
    \item if $n_l = 2$, $n_r = 1$, and $c(\psi)$ is some constant, the wave propagation equation emerges \cite{elmore2012physics}.
\end{enumerate}
These partial differential equations (PDEs) give rise to graphs 
\begin{itemize}
    \item when the domain over which the PDEs apply is discretized for finite difference analysis, or
    \item in lumped parameter modeling of systems by the PDEs on a macroscopic scale.
\end{itemize}  
Based on the graph representation, we seek to assess if the system is detectable (or stabilizable, by duality) given limited output measurement because the detectability of a system assures that we can develop a bounded or converging Riccati-equation based estimator such as a Kalman filter \cite{kalman_new_1960} or a State-Dependent Riccati-Equation (SDRE) filter (for a nonlinear system) \cite{anderson_detectability_1981, beikzadeh_stability_2009, beikzadeh_exponential_2012}.
\section{Preliminaries} \label{sec-preliminaries}
In this section, we define basic concepts that undergird the main result. The following notations are used: $\left\| \cdot \right\| $  denotes the vector norm or the induced matrix norm; inequality operators such as $\geq$ or $>$ used with matrices apply componentwise.
\subsection{Graphs}
A graph is a pair $\mathcal G = (\mathcal V,\mathcal E)$ where $\mathcal V$ is is a finite set of $n$ nodes or vertices  $\{v_1,\dots,v_n\}$,  and $\mathcal E \subseteq \{1,\dots,n\} \times \{1,\dots,n\}$ is a set of edges. The pair $(i,j)$ denotes the edge that links vertex $v_i$ to $v_j$. A \textit{simple} or \textit{strict}  graph is one that has no  edges of the form $(i,i)$ (self-loops) or multiple edges between $v_i$ and $v_j$ ($\mathcal E \subset \{1,\dots,n\} \times \{1,\dots,n\}$). Unless stated, graphs are usually assumed to be simple.  We define the following graph-related terms used in this paper. 

\textit{Definition 1 (Undirected and Directed Graphs):} A graph is \textit{undirected}, or \textit{symmetric}, when $(i,j) \in \mathcal E \Leftrightarrow (j,i) \in \mathcal E$. Otherwise, (if the edges are oriented), it is \textit{directed} or simply regarded as a \textit{digraph}.

\textit{Definition 2 (Connectedness):} An undirected graph is \textit{connected} if there exists a path (set of edges) between any two nodes. If the graph is directed, that is, there is a path of oriented edges between the two nodes, it is \textit{strongly connected}.  

\textit{Definition 3 (Weighted Graph):} A graph is weighted if a factor $w_{i,j} > 0$ (called \textit{weight}) is assigned to any of its edges $(i,j)$. In an unweighted graph $w_{i,j} = 1 ~\forall ~ (i,j)$ if the edge $(i,j)$ exists; otherwise (for a weighted or unweighted graph) $w_{i,j} = 0$.

\textit{Definintion 4 (Adjacency and Laplacian Matrices):} A (weighted) graph is fully described by the adjacency matrix, $\Lambda \in \mathbb{R}^{n \times n}$ where
\begin{align}
\Lambda_{i,j} = 
\begin{cases}w_{i,j} & \text{if } (i,j) \in E \\
0& \text{otherwise}.  \end{cases}
\end{align} 
Consider the diagonal matrix $D$ whose diagonal entry $D_{i,i}=\sum_{j=1}^{n}\Lambda_{i,j}$. $D$ contains the  total egdes' weight that is incident to each vertex, and is termed the \textit{degree matrix}. We define the graph Laplacian $L = D-\Lambda$, that is,
\begin{align}
L_{i,j} = 
\begin{cases}\text{deg}(v_i) & \text{if } i=j \\
-w_{i,j} & \text{if } i\neq j \text{ and } (i,j) \in E \\
0& \text{otherwise},  \end{cases}\label{eq:Laplacian}
\end{align}
where deg$(v_i)$ is the degree of vertex $v_i$: $\sum_{j=1}^{n} D_{i,j}$.
\subsection{Irreducible Matrices} 

\textit{Definition 5:} A matrix $A \in \mathbb{R}^{n \times n}$ is irreducible if there does not exist a permutation matrix $P$ such that the matrix $J=P A P^{T}$ is in upper block triangular form:
\begin{align}
J=P A P^{T}=\left(\begin{array}{cc}
J_{11} & J_{12} \\
O & J_{22}
\end{array}\right),
\end{align}
where the diagonal blocks $J_{11}$, $ J_{12}$ are non-trivial square matrices. Otherwise, it is reducible.

The concept of irreducibility is  strongly associated with graph connectivity as discussed in the following lemma. 
\begin{lemma}[\cite{meyer2000matrix}]
A matrix $A \in \mathbb{R}^{n \times n}$, corresponding to the adjacency matrix of some graph $\mathcal{G}_A$, is irreducible if and only if the $\mathcal{G}_A$ is (strongly) connected. 
\end{lemma}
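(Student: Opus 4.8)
The plan is to prove the biconditional by establishing the contrapositive in each direction, i.e.\ that $A$ is reducible if and only if $\mathcal{G}_A$ fails to be strongly connected. The bridge between the algebraic notion (existence of a permutation that blocks $A$) and the combinatorial notion (absence of directed paths) is the observation that, since $A$ is the adjacency matrix, $A_{i,j} \neq 0$ is precisely the statement that $\mathcal{G}_A$ contains the directed edge $(v_i,v_j)$; more generally, because the entries are nonnegative (weights are $\geq 0$, so no cancellation occurs), $(A^k)_{i,j} \neq 0$ encodes the existence of a directed walk of length $k$ from $v_i$ to $v_j$. Only the single-edge reading $A_{i,j}\neq 0 \Leftrightarrow (v_i,v_j)\in\mathcal{E}$ is needed for the argument, but this walk interpretation supplies the intuition.

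For the forward direction ($A$ reducible $\Rightarrow$ not strongly connected), I would suppose there is a permutation matrix $P$ putting $A$ into the stated block upper-triangular form with a nontrivial lower-left zero block. Reading $P$ as a relabeling of the vertices, partition $\mathcal{V}$ into the index set $S_1$ of the first block and $S_2$ of the second. The zero block asserts that every entry indexed by a row in $S_2$ and a column in $S_1$ vanishes, i.e.\ there is no edge from any vertex of $S_2$ into any vertex of $S_1$. Hence no directed path can leave $S_2$ for $S_1$, and since both sets are nonempty, $\mathcal{G}_A$ is not strongly connected.

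For the reverse direction (not strongly connected $\Rightarrow$ reducible), I would fix a pair of vertices $v_p,v_q$ with no directed path from $v_p$ to $v_q$, and let $R \subseteq \mathcal{V}$ be the set of all vertices reachable from $v_p$ (with $v_p \in R$). Then $R$ is a proper, nonempty subset since $v_q \notin R$, and by the very definition of reachability there can be no edge from $R$ to $\mathcal{V}\setminus R$ (such an edge would enlarge $R$). Choosing a permutation that orders the vertices of $\mathcal{V}\setminus R$ first and those of $R$ second places all of these vanishing entries into the lower-left block, exhibiting $A$ in reducible form.

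I expect the main obstacle to be bookkeeping rather than conceptual: one must be careful that the index/ordering convention for the block form matches the edge-orientation convention $A_{i,j}=w_{i,j}$ used for the adjacency matrix, so that the lower-left zero block corresponds to the correct direction of missing edges (from the second vertex group into the first). A secondary point to state cleanly is that the requirement of \emph{nontrivial square diagonal blocks} guarantees both vertex groups are nonempty, which is exactly what makes each partition a genuine witness to the loss of strong connectivity.
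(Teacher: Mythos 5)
The paper itself offers no proof of this lemma; it is stated as a standard result imported from \cite{meyer2000matrix}. Your argument is correct and is precisely the standard one that the citation refers to: the block partition of a reducible matrix exhibits a nonempty vertex class with no edges into its complement (killing strong connectivity), and conversely the reachable set $R$ of a vertex $v_p$ that cannot reach some $v_q$ yields the required permutation, with both diagonal blocks nontrivial exactly because $v_p \in R$ and $v_q \notin R$ --- a point you handle explicitly and correctly.
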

\begin{coro}
    The Laplacian $L$ of a graph $\mathcal G_A$ is irreducible if it is (strongly) connected.
\end{coro}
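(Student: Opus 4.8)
The plan is to reduce the claim about the Laplacian $L$ to the already-established fact about the adjacency matrix $\Lambda$ from Lemma 1, exploiting that $L = D - \Lambda$ differs from $-\Lambda$ only on its diagonal. First I would record the guiding observation that irreducibility, as defined in Definition 5, is a property of the \emph{off-diagonal} zero/nonzero pattern of a matrix only. A permutation similarity $J = PAP^{T}$ sends diagonal entries to diagonal entries; hence whether the lower-left block $O$ can be made to vanish depends solely on the entries $A_{ij}$ with $i \neq j$, and is completely insensitive to the diagonal entries $A_{ii}$.

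Next I would compare the off-diagonal patterns of $L$ and $\Lambda$. From \eqref{eq:Laplacian}, for $i \neq j$ we have $L_{ij} = -w_{ij}$, which is nonzero exactly when $(i,j) \in E$, i.e.\ exactly when $\Lambda_{ij} = w_{ij} \neq 0$ (recall $w_{ij} > 0$ by Definition 3). Thus $L$ and $\Lambda$ share an identical off-diagonal sparsity pattern, with the degree matrix $D$ contributing only to the diagonal. Since $PDP^{T}$ is again diagonal, $PLP^{T}$ attains the block upper-triangular form of Definition 5 for a given index partition if and only if $P\Lambda P^{T}$ does, the respective lower-left blocks differing only by an overall sign. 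Consequently $L$ is reducible if and only if $\Lambda$ is reducible.

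Finally I would invoke Lemma 1 directly: since $\mathcal{G}_A$ is (strongly) connected, its adjacency matrix $\Lambda$ is irreducible, and therefore $L$ is irreducible as well, which establishes the corollary. Note that only the stated direction (connected $\Rightarrow$ $L$ irreducible) is needed, but the off-diagonal equivalence above in fact delivers both directions at once.

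The main obstacle is making the first step rigorous, namely arguing cleanly that adjoining a diagonal matrix can neither create nor destroy the block-triangular structure. The cleanest route is to rephrase reducibility in terms of the existence of a nonempty proper index set $S \subset \{1,\dots,n\}$ with $A_{ij} = 0$ for every $i \notin S$ and $j \in S$ (equivalently, a set that is ``closed'' under the digraph associated with $A$). Because such indices always satisfy $i \neq j$, this characterization never references the diagonal entries $A_{ii}$, so the degree matrix $D$ is invisible to it and the equivalence between the reducibility of $L$ and that of $\Lambda$ follows immediately.
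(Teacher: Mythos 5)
Your proposal is correct and takes essentially the same route as the paper: the paper likewise writes $PLP^{T} = PDP^{T} - P\Lambda P^{T}$, observes that $PDP^{T}$ stays diagonal (so only the off-diagonal pattern, i.e.\ that of $\Lambda$, matters), and concludes irreducibility of $L$ from Lemma~1. Your additional reformulation via closed index sets just makes explicit the ``diagonal entries are irrelevant'' step that the paper leaves implicit.
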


\begin{proof}
    The Laplacian associated with the graph $\mathcal G_A$ is $L=D-A$. For any permutation matrix $P$,
\begin{align}
    P L P^{T} = P D P^{T} - P A P^{T}
\end{align}
is not upper block triangular since $P D P^{T}$ is diagonal and, for a (strongly) connected graph, $P A P^{T}$ is never upper block triangular.
\end{proof}
Irreducible matrices have ``elegant" properties \cite{meyer2000matrix}; one such property is highlighted in the following lemma.
\begin{lemma}[Perron-Frobenius \cite{meyer2000matrix}]
    If $A \in \mathbb{R}^{n \times n}$ is an irreducible nonnegative matrix $( A \geq 0)$, then $ (I_n+A)^{n-1}$ is a positive matrix  $\left( (I_n+A)^{n-1} > 0\right)$.
\end{lemma}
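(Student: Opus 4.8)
The plan is to exploit the binomial expansion of $(I_n + A)^{n-1}$ together with the graph-theoretic meaning of the powers of a nonnegative matrix. Since $I_n$ and $A$ commute, I would first write
\begin{align}
    (I_n + A)^{n-1} = \sum_{k=0}^{n-1} \binom{n-1}{k} A^k.
\end{align}
Because $A \geq 0$, every power $A^k$ is nonnegative and every binomial coefficient is strictly positive; hence $(I_n+A)^{n-1} \geq 0$, and its $(i,j)$ entry is strictly positive precisely when $(A^k)_{i,j} > 0$ for at least one $k \in \{0,1,\dots,n-1\}$. The task therefore reduces to showing that for every ordered pair $(i,j)$ some such $k$ exists.

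Second, I would interpret the entries of $A^k$ as weighted walks in the graph $\mathcal{G}_A$ whose adjacency matrix is $A$. Expanding the matrix product gives
\begin{align}
    (A^k)_{i,j} = \sum_{i = p_0, p_1, \dots, p_k = j} A_{p_0,p_1} A_{p_1,p_2} \cdots A_{p_{k-1},p_k},
\end{align}
where the sum runs over all length-$k$ walks from $v_i$ to $v_j$. Since each factor is nonnegative and each present edge carries a strictly positive weight, the term associated with any walk is positive; thus $(A^k)_{i,j} > 0$ if and only if there exists a walk of length exactly $k$ from $v_i$ to $v_j$. By the equivalence between irreducibility and strong connectivity established earlier, $\mathcal{G}_A$ is (strongly) connected, so a directed path between any two vertices is guaranteed to exist.

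Finally---and this is the step I expect to carry the most weight---I would bound the length of such a path by $n-1$. For $i = j$ the diagonal term $(A^0)_{i,i} = (I_n)_{i,i} = 1 > 0$ already suffices. For $i \neq j$, I would take a shortest directed path from $v_i$ to $v_j$; such a path cannot repeat a vertex (otherwise the intermediate closed subwalk could be excised to produce a shorter path), so it visits at most $n$ distinct vertices and hence has length $d \leq n-1$. This yields $(A^d)_{i,j} > 0$ with $d \leq n-1$, so the corresponding term appears in the expansion above. Collecting the two cases, every entry of $(I_n+A)^{n-1}$ is strictly positive, which is the claim. The only subtlety to handle carefully is the no-repeated-vertex argument for shortest paths, since it is precisely what ensures the exponent never needs to exceed $n-1$.
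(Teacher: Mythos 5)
Your proof is correct and complete, but there is an important point of comparison to make: the paper does not prove this lemma at all. It is stated as a known result (attributed to Perron--Frobenius theory) with a citation to Meyer's \emph{Matrix Analysis and Applied Linear Algebra}, so there is no in-paper argument to measure yours against. What you have written is, in essence, the standard textbook proof---the one found in the cited reference: expand $(I_n+A)^{n-1} = \sum_{k=0}^{n-1}\binom{n-1}{k}A^k$ using commutativity of $I_n$ and $A$; observe that nonnegativity rules out cancellation, so the $(i,j)$ entry is positive as soon as $(A^k)_{i,j}>0$ for a single $k \le n-1$; interpret $(A^k)_{i,j}$ as a sum of positive weight-products over length-$k$ walks in $\mathcal{G}_A$; invoke the equivalence of irreducibility and strong connectivity (the paper's Lemma 1); and bound the length of a shortest directed path by $n-1$ via the vertex-excision argument, with the $k=0$ term covering the diagonal case $i=j$. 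All of the genuinely delicate steps are handled explicitly, including the one you flag yourself (shortest paths are vertex-simple, hence of length at most $n-1$). As a side remark, your binomial expansion is exactly the identity the paper reuses downstream in its proof of Corollary 2, so your self-contained argument integrates naturally with the paper's subsequent reasoning; it could serve as a drop-in replacement for the external citation.
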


\begin{coro}
    If $A \in \mathbb{R}^{n \times n} \geq 0$ is an irreducible matrix, then $ e^{A} > 0$.
\end{coro}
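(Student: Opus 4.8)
The plan is to combine the power-series definition of the matrix exponential with Lemma 2 (Perron-Frobenius). First I would write $e^{A} = \sum_{k=0}^{\infty} \frac{A^{k}}{k!}$ and note that, because $A \geq 0$, every power $A^{k}$ is entrywise nonnegative. Hence each summand is nonnegative, and we immediately obtain $e^{A} \geq 0$. The real work is to strengthen this to strict positivity in every entry.

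The bridge to Lemma 2 is the binomial expansion $(I_n + A)^{n-1} = \sum_{k=0}^{n-1} \binom{n-1}{k} A^{k}$, which exhibits $(I_n+A)^{n-1}$ as a nonnegative combination of the low-order powers $I_n, A, \dots, A^{n-1}$. By Lemma 2 this matrix is strictly positive. Since all the powers $A^{k}$ are nonnegative, no cancellation can occur in this sum; therefore for every index pair $(i,j)$ at least one of $A^{0}, A^{1}, \dots, A^{n-1}$ must have a strictly positive $(i,j)$ entry, for otherwise the $(i,j)$ entry of $(I_n+A)^{n-1}$ would vanish, contradicting its positivity.

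Finally I would transfer this fact to $e^{A}$. Discarding the nonnegative tail of the series yields the entrywise bound $(e^{A})_{i,j} \geq \sum_{k=0}^{n-1} \frac{(A^{k})_{i,j}}{k!}$. The coefficients $1/k!$ are all strictly positive, and by the previous step at least one term $(A^{k})_{i,j}$ with $k \leq n-1$ is strictly positive while the remaining terms are nonnegative. Consequently each entry of the truncated sum, and hence of $e^{A}$, is strictly positive, which establishes $e^{A} > 0$.

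I do not anticipate a substantive obstacle; the only step demanding care is the logical link in the middle paragraph, namely that strict positivity of an entry of the binomial sum forces strict positivity of that entry in some individual power $A^{k}$. This rests entirely on the nonnegativity of the $A^{k}$, which rules out cancellation.
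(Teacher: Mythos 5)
Your proof is correct and follows essentially the same route as the paper: expand $(I_n+A)^{n-1}$ binomially, invoke Lemma 2 for strict positivity, transfer that positivity to the truncated sum $\sum_{k=0}^{n-1} A^k/k!$ via the no-cancellation argument, and bound $e^A$ below by that truncation. Your middle paragraph simply makes explicit the entrywise reasoning that the paper leaves implicit in its implication step, which is a welcome clarification but not a different argument.
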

\begin{proof}
    Note that $(I_n+A)^{n-1} = \overset{n-1}{\underset{{k = 0}}{\sum}} {n-1 \choose k} A^k $. Since $A\geq 0$,  $\overset{n-1}{\underset{{k = 0}}{\sum}} {n-1 \choose k} A^k > 0 \implies \overset{n-1}{\underset{{k = 0}}{\sum}} \frac{1}{k!} A^k > 0$.  Then clearly,
\begin{align}
    e^{A} = \sum^{\infty}_{k = 0}\frac{A^k}{k!} \geq \sum^{n-1}_{k = 0}\frac{A^k}{k!}  >0.
\end{align}
\end{proof}

\subsection{Detectability}
Consider a discrete state-space system
\begin{subequations}\label{eq:sys_dis}
\begin{align}
        x_{\mathsf{k}+1} &= A_{\mathsf{k}} x_{\mathsf{k}}\\
    y_{\mathsf{k}} &= C_{\mathsf{k}} x_{\mathsf{k}} 
\end{align}
\end{subequations}
where $x_{\mathsf{k}} \in \mathbb{R}^{n}$ is the state vector, $y_{\mathsf{k}} \in \mathbb{R}^{m}$ is the output vector; $A_{\mathsf{k}} \in \mathbb{R}^{n\times n}$  and $C_{\mathsf{k}} \in \mathbb{R}^{m \times n}$ are the system state and output matrices respectively. Further, for two positive successive integers $\mathsf{k}, \mathsf{l}$ ($\mathsf{k}\geq \mathsf{l}$), let the \textit{state transition matrix} be defined as $\Phi_{\mathsf{l}|\mathsf{k}} = \Phi_{\mathsf{l}|\mathsf{l}-1}\Phi_{\mathsf{l}-1|\mathsf{k}}$ where $\Phi_{\mathsf{k}|\mathsf{k}} = I_n$ and $\Phi_{\mathsf{k}+1|\mathsf{k}} = A_{\mathsf{k}}$. Then the system is said to be detectable if the following definition holds. 

\textit{Definition 6 (Uniform Detectability} \cite{anderson_detectability_1981}\textit{):} The pair $(A_{\mathsf{k}},C_{\mathsf{k}})$ is uniformly detectable if there exist integers $p$, $q$, and some $0 \leq a<1$, $b>0$ such that whenever
\begin{align}
\left\|\Phi_{\mathsf{k}+\mathsf{p}|\mathsf{k}}\zeta\right\| & \geq  a\left\|\zeta\right\|
\end{align}
for any $\zeta \in \mathbb{R}^{n}$ and $\mathsf{k}$, then
\begin{align}
\zeta^TW_{\mathsf{k}+\mathsf{q}|\mathsf{k}}\zeta \geq b\zeta^T\zeta,
\end{align}
where $W_{\mathsf{k}+\mathsf{q}|\mathsf{k}}$  is the discrete-time observability gramian given by
\begin{align}\label{eq:gramian_disc}
   W_{\mathsf{k}+\mathsf{q}|\mathsf{k}} \delequal \sum^{k+q}_{i = k}\Phi_{\mathsf{i}|\mathsf{k}}^{T}C_{\mathsf{k}}^TC_{\mathsf{k}}\Phi_{\mathsf{i}|\mathsf{k}}.
\end{align}

\section{Main Result} \label{sec-analysis}
In this section, we show that the structure of the state transition matrix that corresponds to a connected undirected graph or a strongly connected digraph forms a dectectable pair with any nonnegative and nonzero output matrix.


\begin{figure}[h!]
\centering
    \includegraphics[clip, trim=0cm 0cm 0cm 0cm, width=0.35\textwidth]{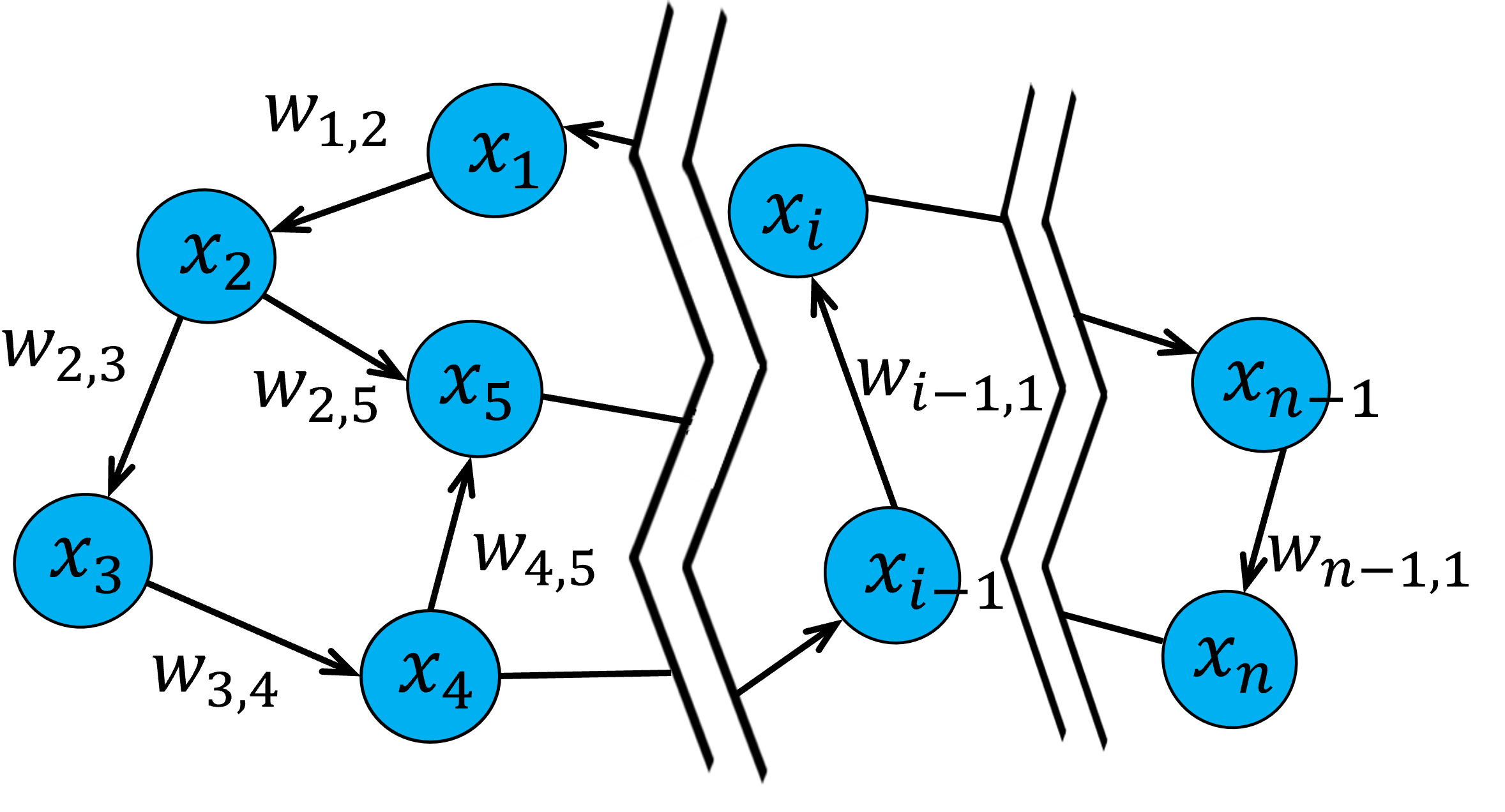}
   \caption{Graph with nodes $x_i$, $i\in{1,2,\dots,n}$ and edge weights $w_{i,j}$ where $j \in \mathcal{N}(i)$, that is, $j$ is in the neighborhood of $i$.}
      \label{pic:weight}
\end{figure}

Consider the system shown in Fig. \ref{pic:weight} where the dynamics of each node $x_i$ are given by
\begin{align}
    \dot{x}_i &=\sum_{j \in \mathcal{N}(i)} w_{i,j}\left(x_j - x_i\right) ~~ i\in\{1, \dots, n\}, \label{eq:sys_graph}
\end{align}
and $w_{i,j}$ is the weight on the vertex $(i,j)$ (if no vertex exists, $w_{i,j}=0$). This system corresponds to a weighted graph and can be represented in state-space form as
\begin{align}
    \dot{x}&= -Lx, \label{eq:sys}
\end{align}
where $L$ is the Laplacian as defined in \eqref{eq:Laplacian}. For a connected undirected graph or strongly connected digraph, the irreducibility of $L$ leads to the following result.
\begin{lemma}
    Let $L$ be the Laplacian of a (strongly) connected graph. Then $e^{-L} > 0$. 
\end{lemma}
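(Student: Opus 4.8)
The plan is to reduce the claim to the corollary establishing that $e^{A} > 0$ for any irreducible nonnegative matrix $A$. The immediate obstacle is that $-L = \Lambda - D$ is \emph{not} nonnegative: its off-diagonal entries equal $\Lambda_{i,j} \geq 0$, which is fine, but its diagonal entries are $-D_{i,i} \leq 0$, so the corollary cannot be applied to $-L$ directly. The whole argument hinges on removing this offending diagonal without disturbing positivity of the exponential.

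First I would introduce a scalar shift. Choosing any $c \geq \max_i D_{i,i}$, I define $M \delequal -L + cI_n = \Lambda + (cI_n - D)$. Since $cI_n - D$ is then a nonnegative diagonal matrix and $\Lambda \geq 0$, the matrix $M$ satisfies $M \geq 0$. Next I would argue that $M$ is irreducible: a scalar shift alters only the diagonal, so the off-diagonal zero/nonzero pattern of $M$ is identical to that of $\Lambda$. Equivalently, $M$ and $\Lambda$ induce the same graph apart from self-loops, and self-loops do not affect strong connectivity; since $\mathcal{G}_A$ is (strongly) connected, Lemma~1 shows that $M$ is irreducible. This is precisely the reasoning already used to establish Corollary~1. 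Applying Corollary~2 to the irreducible nonnegative matrix $M$ then gives $e^{M} > 0$.

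Finally I would undo the shift. Because $cI_n$ commutes with $-L$, the exponential factors as $e^{M} = e^{-L + cI_n} = e^{c} e^{-L}$, using $e^{cI_n} = e^{c} I_n$. Since $e^{c} > 0$ is a positive scalar, it follows that $e^{-L} = e^{-c} e^{M} > 0$, which is the desired conclusion. The only delicate point is verifying that a single choice of $c$ simultaneously restores nonnegativity and preserves irreducibility; taking $c$ at least as large as the maximum degree handles the former, and the latter is automatic because the shift leaves the off-diagonal structure untouched. Everything after these observations is routine.
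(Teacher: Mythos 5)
Your proof is correct and follows essentially the same route as the paper's: shift $-L$ by a scalar multiple of the identity (the paper's $\gamma$, your $c \geq \max_i D_{i,i}$) to obtain a nonnegative irreducible matrix, apply the Perron--Frobenius corollary to get a strictly positive exponential, and then factor out the commuting scalar term $e^{cI_n}$ to conclude $e^{-L} > 0$. The only difference is cosmetic: you give an explicit choice of the shift constant and spell out why irreducibility survives the diagonal perturbation, which the paper handles by reference to the proof of its Corollary~1.
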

\begin{proof}
    Since all non-diagonal terms of $-L$ are non-negative, $\exists \enspace \gamma>0$ such that $A_\gamma = -L + \gamma I_n \geq 0$. From the proof of Corollary 1, it is clear that if $L$ is irreducible, $A_\gamma$ is also irreducible. Hence, from Corollary 2, $e^{A_\gamma}$ is strictly positive. Therefore, 
$$
e^{-L} = e^{A_{\gamma}-\gamma I_n} = e^{-\gamma}e^{A_{\gamma}}
$$
is also strictly positive.
\end{proof}

In addition to the strict positiveness of $e^{-L}$, the characteristic space associated with $L$ places further bounds on the elements and norm of $e^{-L}$ as stated in the following proposition and the ensuing corollary.
\begin{prop}
    Let $L$ be the Laplacian of a (strongly) connected graph. Then $e^{-L}$ is a right stochastic matrix whose entries are positive real numbers less than $1$, that is, $0 <e^{-L} < 1$. 
\end{prop}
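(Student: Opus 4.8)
The plan is to verify the three asserted properties in turn, drawing on Lemma 3 for the strict positivity. Since Lemma 3 already establishes $e^{-L} > 0$, the lower bound needs no further argument; the remaining work is to show that $e^{-L}$ is right stochastic and that every entry is strictly below $1$.

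First I would establish right stochasticity, which is the crux of the argument. A nonnegative matrix is right stochastic precisely when each of its rows sums to one, equivalently when $e^{-L}\mathbf{1} = \mathbf{1}$, where $\mathbf{1}$ denotes the all-ones vector in $\mathbb{R}^n$. The key observation is that the row sums of the Laplacian vanish: from the definition in \eqref{eq:Laplacian}, the $i$th row of $L$ sums to $\deg(v_i) - \sum_{j \neq i} w_{i,j} = 0$, so $L\mathbf{1} = 0$. Consequently $\mathbf{1}$ lies in the kernel of every positive power of $L$, and I would exploit this directly in the series expansion $e^{-L}\mathbf{1} = \sum_{k=0}^{\infty} \frac{(-L)^k}{k!}\mathbf{1}$: the $k=0$ term contributes $I_n \mathbf{1} = \mathbf{1}$, while for each $k \geq 1$ the trailing factor $L\mathbf{1} = 0$ annihilates the term. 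Hence $e^{-L}\mathbf{1} = \mathbf{1}$, so the matrix is right stochastic.

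With strict positivity and unit row sums in hand, the upper bound is immediate. For a connected graph on $n \geq 2$ nodes, each row of $e^{-L}$ consists of $n$ strictly positive entries summing to one; were any single entry at least $1$, the contribution of the remaining (strictly positive) entries would force the row sum to exceed $1$, a contradiction. Therefore every entry is strictly less than $1$, which together with Lemma 3 yields $0 < e^{-L} < 1$.

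I do not anticipate a serious obstacle, as the argument is essentially algebraic and combines the kernel property $L\mathbf{1}=0$ with the already-proven positivity. The only subtlety worth flagging is the implicit assumption $n \geq 2$ in the upper-bound step: for a degenerate single-node graph one has $L = 0$ and $e^{-L} = 1$, so the strict inequality would fail—but connectivity in any nontrivial sense presumes at least two nodes, so this edge case is benign and can be dispensed with in one line.
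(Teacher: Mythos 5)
Your proof is correct and follows essentially the same route as the paper: both establish $L\mathbf{1}_n = 0$ from the vanishing row sums of the Laplacian, conclude $e^{-L}\mathbf{1}_n = \mathbf{1}_n$ (you via the power series, the paper via the eigenvalue correspondence), and combine this with the strict positivity from Lemma 3 to bound every entry in $(0,1)$ for $n \geq 2$. Your explicit series-expansion justification and the remark on the degenerate single-node case are minor elaborations, not a different approach.
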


\begin{proof}
    We note that that the vector $v = \mathbf{1}_n$ where $\mathbf{1}_n = \begin{bmatrix}
    1 & 
    \cdots & 1
\end{bmatrix}^T \in \mathbb{R}^n$ is a right eigenvector of $L$ with a $0$ eigenvalue:
\begin{align}
Lv = L\cdot \mathbf{1}_n= \begin{bmatrix}
    \sum^n\limits_{j=2}L_{1,j}- \sum^n\limits_{j=2}L_{1,j}
    \\ \vdots \\
    \sum^n\limits_{\substack{j=1 \\ j\neq i}}L_{i,j}- \sum^n\limits_{\substack{j=1 \\ j\neq i}}L_{i,j}\\ \vdots \\
    \sum^{n-1}\limits_{j=1}L_{n,j} -\sum^{n-1}\limits_{j=1}L_{n,j}
\end{bmatrix} = 0 \cdot \mathbf{1}_n
\end{align}
since for a (strongly) connected graph there exists some $j$ such that $L_{i,j}>0$ $\forall i \neq j$. It follows that $v = \mathbf{1}_n$ must also be an eigenvector of $e^{-L}$ with eigenvalue $e^0 = 1$. Hence
\begin{align}
    e^{-L} \cdot \mathbf{1}_n = \mathbf{1}_n ,
\end{align}
or
\begin{align}
    \sum^n\limits_{j=1}e^{-L}_{i,j}= 1 ~~\forall i .
\end{align}
Recall from Lemma 3 that $e^{-L}_{i,j} > 0$. Thus, for a graph with $n\geq 2$,
\begin{align}
    0 < e^{-L}_{i,j} < 1.
\end{align}
\end{proof}
\begin{coro}
    The induced $\infty-$norm of $e^{-L}$ is unique, that is, 
vector $v= \mathbf{1}_n$ is the unique solution of 
\begin{align}
   \max_{\left\|v\right\|_\infty = 1} \left\|e^{-L}v\right\|_\infty.
\end{align}
\end{coro}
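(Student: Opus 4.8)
The plan is to leverage the structural facts about $e^{-L}$ already established in Proposition 1, namely that $e^{-L}$ is right stochastic with strictly positive entries ($0 < e^{-L}_{i,j} < 1$ and each row summing to $1$). First I would pin down the value of the norm itself: since for any matrix the induced $\infty$-norm equals the maximum absolute row sum, and every row of $e^{-L}$ has positive entries summing to $1$, we immediately obtain $\left\|e^{-L}\right\|_\infty = 1$. Substituting $v = \mathbf{1}_n$ and using $e^{-L}\mathbf{1}_n = \mathbf{1}_n$ (from the Proposition) confirms that $\left\|e^{-L}\mathbf{1}_n\right\|_\infty = 1 = \left\|\mathbf{1}_n\right\|_\infty$, so $\mathbf{1}_n$ is indeed \emph{a} maximizer attaining the value $1$.

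The real content of the corollary is uniqueness, so next I would characterize \emph{all} maximizers. Suppose $v$ satisfies $\left\|v\right\|_\infty = 1$ and $\left\|e^{-L}v\right\|_\infty = 1$. Then there is a row index $i$ with $\left|(e^{-L}v)_i\right| = 1$. The key observation is that, because the $i$-th row of $e^{-L}$ consists of strictly positive entries summing to $1$, the scalar $(e^{-L}v)_i = \sum_j e^{-L}_{i,j} v_j$ is a \emph{strict} convex combination of the components $v_j$, each of which lies in $[-1,1]$.

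The crux of the argument is then an elementary fact about strict convex combinations: if $\sum_j \lambda_j v_j$ with all $\lambda_j > 0$, $\sum_j \lambda_j = 1$, and $|v_j| \le 1$ has absolute value equal to $1$, then necessarily every $v_j = +1$ or every $v_j = -1$. This holds because $\sum_j \lambda_j v_j \le \sum_j \lambda_j = 1$, and equality forces $v_j = 1$ for \emph{every} index (each weight being strictly positive), with the symmetric statement for the value $-1$. Consequently $v = \mathbf{1}_n$ or $v = -\mathbf{1}_n$, so the maximizer is unique up to sign, which is the desired conclusion. The main obstacle—and the precise point where the strict positivity of $e^{-L}$ (Lemma 3 / Proposition 1) is indispensable rather than mere nonnegativity—is exactly this final step: were any entry $e^{-L}_{i,j}$ permitted to vanish, the corresponding component $v_j$ would be left unconstrained and uniqueness would break down.
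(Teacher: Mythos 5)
Your proof is correct, and it is in fact tighter than the paper's own. The paper follows the same outline---right stochasticity gives $\left\|e^{-L}\mathbf{1}_n\right\|_\infty = 1$, and strict positivity is invoked to exclude other maximizers---but it merely asserts the exclusion step: it claims that if $v\neq \mathbf{1}_n$ while $\left\|v\right\|_\infty = 1$, then $\left\|e^{-L}v\right\|_\infty < 1$ ``since $0 < e^{-L} < 1$,'' with no equality-case analysis. Your strict-convex-combination lemma supplies exactly that missing analysis, and in doing so it surfaces a genuine error in both the statement and the paper's proof: by linearity, $e^{-L}(-\mathbf{1}_n) = -\mathbf{1}_n$, so $v = -\mathbf{1}_n$ also satisfies $\left\|v\right\|_\infty = 1$ and $\left\|e^{-L}v\right\|_\infty = 1$. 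The maximizer is therefore \emph{not} unique as the corollary literally claims; as you conclude, the set of maximizers is exactly $\{\mathbf{1}_n, -\mathbf{1}_n\}$, i.e., uniqueness holds only up to sign. This correction is harmless downstream: in Theorem 1 both the non-contraction condition and the gramian bound $v^T W_{\mathsf{k}+\mathsf{q}|\mathsf{k}} v \geq b\, v^T v$ are invariant under $v \mapsto -v$, so verifying the gramian condition on the line spanned by $\mathbf{1}_n$ suffices. Still, your equality-case argument---each entry of $e^{-L}v$ is a strict convex combination of the components of $v$, so attaining modulus $1$ forces all components equal to $+1$ or all equal to $-1$---is the step the paper needed and omitted, and it is precisely where strict positivity (rather than mere nonnegativity) of $e^{-L}$ enters.
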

\begin{proof}
    Since $e^{-L}$ is right stochastic, if  $v= \mathbf{1}_n$, 
\begin{align}
 \left\|e^{-L}v\right\|_\infty = \left\|e^{-L} \mathbf{1}_n\right\|_\infty = \left\|\mathbf{1}_n\right\|_\infty =\left\|v\right\|_\infty = 1.
\end{align}
If $v\neq \mathbf{1}_n$ while $\left\|v\right\|_\infty = 1$, since  $0 <e^{-L} < 1$, 
\begin{align}
 \left\|e^{-L}v\right\|_\infty <
 \left\|\mathbf{1}_n\right\|_\infty =\left\|v\right\|_\infty = 1.
\end{align}
Hence, $v= \mathbf{1}_n$ is the unique argument of $\max_{\left\|v\right\|_\infty = 1} \left\|e^{-L}v\right\|_\infty$.
\end{proof}
We now consider the implication of the the uniqueness of the $\infty-$norm of $e^{-L}$ on the system's detectability both (i) when the system's Laplacian is time-invariant and (ii) when it is parameter-varying.  In the time-invariant case, the uniqueness of the $\infty-$norm of $e^{-L}$ immediately implies that any unobservable mode of the system is stable. For approximations of the state-transition matrix in the parameter-varying case, the same argument holds.
\subsection{Linear Time-Invariant System}
Suppose the system in \eqref{eq:sys} has a piecewise constant input $u_\mathsf{{k}} \in \mathbb{R}^{r}$ and output $y_\mathsf{{k}}\in \mathbb{R}^{m}$ over time interval $\Delta t = t_{\mathsf{k+1}} - t_{\mathsf{k}}$, the discrete-time formulation of the system may be written as
\begin{subequations}\label{eq:sys_disc}
\begin{align}
        x_{\mathsf{{k}}+1} &= e^{-L\Delta t}x_\mathsf{{k}} +Bu_\mathsf{k}\\
    y_\mathsf{{k}} &= Cx_\mathsf{{k}}
\end{align}
\end{subequations}
where $B$ and $C$ are the input and output matrices respectively. The structure and bounds on $e^{-L}$ enable us to make the following generalization on the detectability of the system.

\begin{thm}
    Let $L$ be the Laplacian of a (strongly) connected graph, then the pair 
\begin{enumerate}
    \item $(e^{-L\Delta t}, C)$ is uniformly detectable if $C$ has at least one non-zero row sum;
    \item $(e^{-L\Delta t}, B)$ is uniformly stabilizable if $B$ has at least one non-zero column sum.
\end{enumerate}
\end{thm}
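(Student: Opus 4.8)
The plan is to reduce the uniform detectability of Definition 6 to a purely spectral statement about $M := e^{-L\Delta t}$. Because $A_{\mathsf{k}}=M$ is constant, the transition matrix is $\Phi_{\mathsf{k}+\mathsf{p}|\mathsf{k}}=M^{\mathsf{p}}$ and the Gramian in \eqref{eq:gramian_disc} is the time-invariant $W=\sum_{i=0}^{q}(M^{i})^{T}C^{T}C\,M^{i}$, so Definition 6 collapses to the classical requirement that every unobservable mode of $(M,C)$ be strictly stable. I first note that $L\Delta t$ is itself the Laplacian of the given graph with all weights scaled by $\Delta t>0$, hence Lemma 3, Proposition 1 and Corollary 3 apply verbatim to $M$: it is right stochastic, $0<M<1$, and $\mathbf{1}_n$ is the unique maximizer of $\|Mv\|_\infty$ over $\|v\|_\infty=1$.

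For part (1) I would extract the spectral structure of $M$ directly from Corollary 3. If $Mv=\lambda v$ with $v$ not a scalar multiple of $\mathbf{1}_n$, then $|\lambda|\,\|v\|_\infty=\|Mv\|_\infty<\|v\|_\infty$, so $|\lambda|<1$; moreover, since each $M^{\mathsf{p}}$ is again right stochastic with $\|M^{\mathsf{p}}\|_\infty=1$, the eigenvalue $1$ carries no Jordan block and is simple, with eigenvector $\mathbf{1}_n$. Thus $\mathbf{1}_n$ spans the only non-decaying mode. By the eigenvector (PBH) observability test this mode is unobservable exactly when $C\mathbf{1}_n=0$, and $C\mathbf{1}_n$ is precisely the vector of row sums of $C$. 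Hence if $C$ has at least one non-zero row sum, then $C\mathbf{1}_n\neq0$, the mode $\mathbf{1}_n$ is observable, and every unobservable mode satisfies $|\lambda|<1$, giving detectability. To land exactly on Definition 6 I would take $q=n-1$ (so $W$ is the full observability Gramian) and pick $\mathsf{p}$ large and $a<1$ close to $1$: the hypothesis $\|M^{\mathsf{p}}\zeta\|\geq a\|\zeta\|$ forces the $\mathbf{1}_n$-component of $\zeta$ to carry a fixed fraction of $\|\zeta\|$ since all other modes contract like $|\lambda|^{\mathsf{p}}$, while $\mathbf{1}_n^{T}W\mathbf{1}_n>0$ because $C\mathbf{1}_n\neq0$; a compactness estimate then yields the uniform lower bound $b$.

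For part (2) I would invoke duality: $(M,B)$ is uniformly stabilizable iff $(M^{T},B^{T})$ is uniformly detectable. Since $M>0$, also $M^{T}>0$ and shares the spectrum of $M$, so $1$ is again simple with every other eigenvalue inside the unit disk; its eigenvector is now the left Perron (stationary) vector $w_0>0$ of $M$, characterized by $w_0^{T}M=w_0^{T}$. Running the part-(1) argument on $(M^{T},B^{T})$ shows the pair is detectable iff $B^{T}w_0\neq0$, i.e. $w_0^{T}B\neq0$. In the undirected (symmetric) case $L=L^{T}$ forces $w_0=\mathbf{1}_n$, so the criterion becomes $\mathbf{1}_n^{T}B\neq0$, which is exactly the requirement that $B$ have at least one non-zero column sum.

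I expect two steps to demand the most care. The first is discharging the quantifiers of Definition 6 from the modal picture: for a genuinely time-invariant pair this is the textbook equivalence between uniform and ordinary detectability, but exhibiting $p,q,a,b$ cleanly rather than merely asserting them is the one non-routine estimate. The second, and the real obstacle, is the directed case of part (2): the eigenvector of $M^{T}$ attached to eigenvalue $1$ is the stationary vector $w_0$, not $\mathbf{1}_n$, so the sharp condition is $w_0^{T}B\neq0$. The plain ``non-zero column sum'' statement coincides with this precisely when the graph is weight-balanced (in particular undirected), so I would either state part (2) under that hypothesis or carry $w_0$ explicitly through the argument.
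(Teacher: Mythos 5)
Your proposal is correct in substance but follows a genuinely different route from the paper. The paper's proof never leaves the norm picture: it sets $\Phi_{\mathsf{k}+\mathsf{p}|\mathsf{k}} = e^{-L\mathsf{p}\Delta t}$, invokes Corollary 3 to say that $v=\mathbf{1}_n$ is the unique vector for which no $a\in[0,1)$ satisfies \eqref{eq:detect_norm}, and then checks the Gramian inequality \eqref{eq:detect_2norm} at that single vector, where right stochasticity gives $v^TWv = (q+1)\left\|C\mathbf{1}_n\right\|^2 > 0$; part 2 is dispatched in one sentence by duality. You instead extract the full spectral picture of $M=e^{-L\Delta t}$ (eigenvalue $1$ simple with eigenvector $\mathbf{1}_n$, no Jordan block, all other eigenvalues strictly inside the unit disk), apply the PBH test, and then reconstruct the quantifiers $p,q,a,b$ of Definition 6. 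What your route buys is rigor exactly where the paper is weakest: Definition 6 demands the Gramian bound hold \emph{uniformly over all} $\zeta$ with $\left\|\Phi_{\mathsf{k}+\mathsf{p}|\mathsf{k}}\zeta\right\| \geq a\left\|\zeta\right\|$, and for any fixed $a<1$ this set is an open cone containing far more than the ray through $\mathbf{1}_n$; the paper verifies the bound only at $\mathbf{1}_n$ itself, while your decomposition-plus-compactness argument covers the whole cone. Two small repairs would make your sketch airtight: Corollary 3 is stated for real vectors, so the strict contraction must be extended to complex eigenvectors (standard, since $|(Mv)_i|\leq\sum_j M_{ij}|v_j|$ with equality forcing $v$ proportional to $\mathbf{1}_n$); and the compactness step should note that $\ker W$ is $M$-invariant, so any $\zeta\in\ker W$ has zero $\mathbf{1}_n$-component, else $M^{\mathsf{p}}\zeta\to\alpha\mathbf{1}_n$ would force $C\mathbf{1}_n=0$.

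Your treatment of part 2 exposes a genuine gap in the paper, and you are right not to paper over it. Duality converts uniform stabilizability of $(M,B)$ into uniform detectability of $(M^T,B^T)$, and the non-contracting mode of $M^T$ is the left Perron vector $w_0$ of $M$ (equivalently, $\ker L^T$), not $\mathbf{1}_n$; the sharp condition is $w_0^TB\neq 0$. For a strongly connected digraph that is not weight-balanced this differs from the column-sum condition. Concretely, with $n=2$, $w_{1,2}=1$, $w_{2,1}=2$, one gets $L=\left(\begin{smallmatrix}1 & -1 \\ -2 & 2\end{smallmatrix}\right)$ and $w_0=(2,1)^T$; then $B=(1,-2)^T$ has nonzero column sum yet $w_0^TB=0$, so the eigenvalue-$1$ mode is uncontrollable and $(e^{-L\Delta t},B)$ is not uniformly stabilizable. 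Hence part 2 as stated is false for general strongly connected digraphs, and the paper's one-line appeal to duality silently makes this error; your proposed fix---either restrict part 2 to weight-balanced (in particular undirected) graphs, or restate the hypothesis as $w_0^TB\neq 0$---is the correct resolution.
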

\begin{proof}
    1) Let $\Phi_{\mathsf{k}+\mathsf{p}|\mathsf{{k}}} = e^{-L\mathsf{p}\Delta t} $. Note that $L\mathsf{p}\Delta t$ is simply a scalar multiple of $L$ and hence is still a (strongly) connected graph Laplacian. Following Definition 6, we need to evaluate $v$ such that $\nexists ~ a \in [0,1)$ that satisfies 
\begin{align}
    \frac{\left\|\Phi_{\mathsf{k}+\mathsf{p}|\mathsf{k}}v\right\|}{\left\|v\right\|} < a , \label{eq:detect_norm}
\end{align}
 and show that for such $v$,
 \begin{align}
v^TW_{\mathsf{k}+\mathsf{q}|\mathsf{k}}v \geq bv^Tv ~~ (b>0)\label{eq:detect_2norm}
\end{align}
for any $q$. Observe that maximum of the left-hand-side of \eqref{eq:detect_norm} is, by definition, the induced norm of $\Phi_{\mathsf{k}+\mathsf{p}|\mathsf{k}}$:
\begin{align}
     \left\|\Phi_{\mathsf{k}+\mathsf{p}|\mathsf{k}} \right\|= \max_{v\neq 0} \frac{\left\|\Phi_{\mathsf{k}+\mathsf{p}|\mathsf{k}}v\right\|}{\left\|v\right\|} =  \max_{\left\|v\right\| = 1} \left\|\Phi_{\mathsf{k}+\mathsf{p}|\mathsf{k}}v\right\|.
\end{align}
Consider the infinity norm of $\Phi_{\mathsf{k}+1|\mathsf{k}} = e^{-L\Delta t}$ $(\mathsf{p} = 1)$. From Corollary 3, we have that iff $v= \mathbf{1}_n$,
\begin{align}
    \frac{\left\|\Phi_{\mathsf{k}+\mathsf{p}|\mathsf{k}}v\right\|_\infty}{\left\|v\right\|_\infty} = \left\|\Phi_{\mathsf{k}+\mathsf{p}|\mathsf{k}} \right\|_\infty \nless  a ~~~\text{for } a \in [0,1).
\end{align}
Thus, we only need to show that \eqref{eq:detect_2norm} is satisfied with $v= \mathbf{1}_n$. Since $C$ has at least one non-zero row sum, $\exists ~ b$ such that
\begin{align}
v^TW_{\mathsf{k}+\mathsf{q}|\mathsf{k}}v  &= v^T\sum^{k+q}_{i = k}\Phi_{\mathsf{i}|\mathsf{k}}^{T}C_{\mathsf{k}}^TC_{\mathsf{k}}\Phi_{\mathsf{i}|\mathsf{k}}v \nonumber \\&= q \mathbf{1}_n^TC^TC\mathbf{1}_n \geq b>0.
\end{align}

2) This follows immediately from the duality of detectability with stabilizability \cite{anderson_detectability_1981}.
\end{proof}
In a graph system, it is typical to take measurements directly from the nodes, that is, the output is a vector of select system (node) states, as described in the following definition. 

\textit{Definition 7:} A LTI system with state $x \in \mathbb{R}^{n}$ and output $y = Cx \in \mathbb{R}^{m}$ has \textit{state output} if $C\in \mathbb{R}^{m\times n}$ is a binary output matrix where $C_{i,j} = 1$ only if $y_{i} = x_{j}$ for $i \in \{1,\dots m\}$, $j \in \{1,\dots n\}$.

The output matrix $C$ in such a system conforms to Theorem 1 and hence, we have the following corollary.
\begin{coro}
    The discrete-time system in \eqref{eq:sys_disc} is uniformly detectable if it has state output.
\end{coro}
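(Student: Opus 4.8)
The plan is to recognize the claim as a direct specialization of Theorem 1, so that the entire task reduces to verifying that a state-output matrix $C$ satisfies the single hypothesis required there, namely that $C$ has at least one non-zero row sum. I would begin by unpacking Definition 7: for each output index $i$ there is a unique state index $j$ with $C_{i,j}=1$ and every other entry of that row equal to $0$, so the $i$th row of $C$ is a standard basis row vector $e_{j}^{T}$.

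The key step is then the trivial but necessary observation about row sums. Since each row of $C$ is some $e_{j}^{T}$, its entries sum to exactly $1$. Hence every row sum of $C$ equals $1$, and in particular (provided the system measures at least one state, i.e.\ $m\geq 1$) $C$ has at least one non-zero row sum. This is precisely the condition assumed in part~1 of Theorem 1, and it holds for any state-output matrix without further argument.

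Finally, I would invoke Theorem 1 directly. Because $L$ is the Laplacian of a (strongly) connected graph and $C$ has a non-zero row sum, part~1 of Theorem 1 yields that the pair $(e^{-L\Delta t},C)$ is uniformly detectable, which is exactly the assertion for the discrete-time system \eqref{eq:sys_disc}.

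I do not anticipate any genuine obstacle here: all of the substantive work---establishing via Corollary 3 that $\mathbf{1}_n$ uniquely attains the induced $\infty$-norm of $e^{-L\Delta t}$, and that the resulting observability gramian admits the lower bound $q\,\mathbf{1}_n^{T}C^{T}C\,\mathbf{1}_n$---has already been completed in Theorem 1. The only point warranting a word of care is the degenerate case of an empty output ($C=0$), which is excluded by the very notion of a state-output system; once at least one state is measured, the row-sum condition is automatic and the corollary follows with no additional computation.
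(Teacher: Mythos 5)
Your proposal is correct and follows exactly the paper's intended route: the paper simply remarks that the proof is ``evident from Definition~7,'' and what you have done is spell out that implicit verification---each row of a state-output matrix $C$ is a standard basis row vector with row sum $1$, so the hypothesis of part~1 of Theorem~1 holds and the conclusion follows immediately. Your added care about the degenerate case $m\geq 1$ is a reasonable (if minor) refinement of the same argument.
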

The proof is evident from Definition 7.


\subsection{Parameter-Varying System}

Instead of being constant, suppose the weights in the dynamical system given in \eqref{eq:sys_graph} are functions of some time-dependent parameter $\rho(t)$.  Then the system given in \eqref{eq:sys} becomes Linear Parameter-Varying (LPV), or quasi-LPV if $\rho$ is a function of the state $x$ itself (that is, the system is nonlinear) \cite{huang_nonlinear_nodate}. Hence, we can write, without loss of generality, that
\begin{align}
    \dot{x}&= -L\left(\rho(x,t)\right)x. \label{eq:qlpv_sys}
\end{align}
In general, there are no closed form solutions for \eqref{eq:qlpv_sys}, and the state transition matrix is usually evaluated numerically. 
Over the time interval $[t_0, t_f]$, the state transition matrix may be expressed as \cite{rugh1993linear}:
 \begin{align}
    \Phi(t_0, t_f) &= \Phi(t_0, t_1) \Phi(t_1, t_2)\dots \Phi(t_{f-1}, t_f)\label{eq:st_trans}
\end{align}
Assume that each $\delta t_i =  t_{i+1} - t_{i}$, $i \in \{1, 2, \dots, f \}$ is sufficiently small, and $L\left(\rho(x,t)\right) \equiv L_i$ for $t \in [t_{i}, t_{i+1}]$. Then, since $e^{-L_i \delta t_i}$ is positive and right stochastic (Corollary 3), the state transition matrix 
\begin{align}
    \Phi(t_0, t_f) &= \prod_{i = 0}^f e^{-L_i \delta t_i},\label{eq:st_trans_approx}
\end{align}
is also positive and right stochastic. Based on Theorem 1, the system (in \eqref{eq:qlpv_sys}) will, therefore, also be detectable.
\section{Conclusion}\label{sec-conclusion}
In this paper, we have shown that if a system can be represented as an undirected fully connected or a directed strongly connected graph, then the system is detectable if measurement is made from any node. This is because the only unstable mode is observable from any node measurement. We have demonstrated this result for a linear time-invariant system and shown that by approximate solution for the state transition matrix, the same result applies to a linear parameter-varying system. The implication is that for all such systems, a stable Kalman filter or SDRE filter can be synthesized.





\bibliographystyle{IEEEtran}
\bibliography{Detectability_ref}

\begin{thebibliography}{10}
\providecommand{\url}[1]{#1}
\csname url@samestyle\endcsname
\providecommand{\newblock}{\relax}
\providecommand{\bibinfo}[2]{#2}
\providecommand{\BIBentrySTDinterwordspacing}{\spaceskip=0pt\relax}
\providecommand{\BIBentryALTinterwordstretchfactor}{4}
\providecommand{\BIBentryALTinterwordspacing}{\spaceskip=\fontdimen2\font plus
\BIBentryALTinterwordstretchfactor\fontdimen3\font minus
  \fontdimen4\font\relax}
\providecommand{\BIBforeignlanguage}[2]{{%
\expandafter\ifx\csname l@#1\endcsname\relax
\typeout{** WARNING: IEEEtran.bst: No hyphenation pattern has been}%
\typeout{** loaded for the language `#1'. Using the pattern for}%
\typeout{** the default language instead.}%
\else
\language=\csname l@#1\endcsname
\fi
#2}}
\providecommand{\BIBdecl}{\relax}
\BIBdecl

\bibitem{chen_linear_2014}
\BIBentryALTinterwordspacing
C.~Chen, \emph{Linear {System} {Theory} and {Design}}, ser. The {Oxford}
  {Series} in {Electrical} and {Computer} {Engineering}.\hskip 1em plus 0.5em
  minus 0.4em\relax Oxford, UK: Oxford University Press, 2014.
\BIBentrySTDinterwordspacing

\bibitem{wonham_linear_2012}
\BIBentryALTinterwordspacing
W.~Wonham, \emph{Linear {Multivariable} {Control}: a {Geometric} {Approach}:
  {A} {Geometric} {Approach}}, ser. Stochastic {Modelling} and {Applied}
  {Probability}.\hskip 1em plus 0.5em minus 0.4em\relax Springer New York,
  2012.
\BIBentrySTDinterwordspacing

\bibitem{anderson_detectability_1981}
\BIBentryALTinterwordspacing
B.~D.~O. Anderson and J.~B. Moore, ``\BIBforeignlanguage{en}{Detectability and
  {Stabilizability} of {Time}-{Varying} {Discrete}-{Time} {Linear}
  {Systems}},'' \emph{\BIBforeignlanguage{en}{SIAM J. Control Optim.}},
  vol.~19, no.~1, pp. 20--32, Jan. 1981.
\BIBentrySTDinterwordspacing

\bibitem{Aksland}
\BIBentryALTinterwordspacing
C.~T. Aksland, J.~P. Koeln, and A.~G. Alleyne, ``{A Graph-Based Approach for
  Dynamic Compressor Modeling in Vapor Compression Systems},'' \emph{Dyn. Syst.
  Control Conf.}, vol.~3, 10 2017, v003T27A011.
\BIBentrySTDinterwordspacing

\bibitem{Inyang-Udoh2021}
U.~Inyang-Udoh and S.~Mishra, ``A physics-guided neural network dynamical model
  for droplet-based additive manufacturing,'' \emph{IEEE Trans. Control Syst.
  Technol.}, vol.~30, no.~5, pp. 1863--1875, 2022.

\bibitem{olfati-saber_consensus_2007}
R.~Olfati-Saber, J.~A. Fax, and R.~M. Murray, ``Consensus and {Cooperation} in
  {Networked} {Multi}-{Agent} {Systems},'' \emph{Proc. IEEE}, vol.~95, no.~1,
  pp. 215--233, Jan. 2007.

\bibitem{mesbahi_graph_2010}
\BIBentryALTinterwordspacing
M.~Mesbahi and M.~Egerstedt, \emph{Graph {Theoretic} {Methods} in {Multiagent}
  {Networks}}, ser. Princeton {Series} in {Applied} {Mathematics}.\hskip 1em
  plus 0.5em minus 0.4em\relax Princeton University Press, 2010.
\BIBentrySTDinterwordspacing

\bibitem{liu_controllability_2011}
\BIBentryALTinterwordspacing
Y.-Y. Liu, J.-J. Slotine, and A.-L. Barabási,
  ``\BIBforeignlanguage{en}{Controllability of complex networks},''
  \emph{\BIBforeignlanguage{en}{Nature}}, vol. 473, no. 7346, pp. 167--173, May
  2011.
\BIBentrySTDinterwordspacing

\bibitem{kalman_general_1960}
\BIBentryALTinterwordspacing
R.~E. Kalman, ``\BIBforeignlanguage{en}{On the general theory of control
  systems},'' \emph{\BIBforeignlanguage{en}{IFAC Proc. Volumes}}, vol.~1,
  no.~1, pp. 491--502, Aug. 1960.
\BIBentrySTDinterwordspacing

\bibitem{hautus_stabilization_1970}
\BIBentryALTinterwordspacing
M.~L.~J. Hautus, ``\BIBforeignlanguage{en}{Stabilization controllability and
  observability of linear autonomous systems},''
  \emph{\BIBforeignlanguage{en}{Indagationes Mathematicae (Proc.)}}, vol.~73,
  pp. 448--455, Jan. 1970.
\BIBentrySTDinterwordspacing

\bibitem{montanari_observability_2020}
\BIBentryALTinterwordspacing
A.~N. Montanari and L.~A. Aguirre, ``\BIBforeignlanguage{en}{Observability of
  {Network} {Systems}: {A} {Critical} {Review} of {Recent} {Results}},''
  \emph{\BIBforeignlanguage{en}{J Control Autom Electr Syst}}, vol.~31, no.~6,
  pp. 1348--1374, Dec. 2020.
\BIBentrySTDinterwordspacing

\bibitem{parlangeli_observability_2010}
G.~Parlangeli and G.~Notarstefano, ``On the observability of path and cycle
  graphs,'' in \emph{49th {IEEE} {Conf.} {Decis.} {Control} ({CDC})}, Dec.
  2010, pp. 1492--1497.

\bibitem{tanner_controllability_2004}
H.~Tanner, ``On the controllability of nearest neighbor interconnections,'' in
  \emph{2004 43rd {IEEE} {Conf.} {Decis.} {Control} ({CDC})}, vol.~3, Dec.
  2004, pp. 2467--2472 Vol.3.

\bibitem{notarstefano_observability_2011}
G.~Notarstefano and G.~Parlangeli, ``Observability and reachability of grid
  graphs via reduction and symmetries,'' in \emph{2011 50th {IEEE} {Conf.}
  {Decis.} {Control} and {Eur.} {Control} {Conf.}}, Dec. 2011, pp. 5923--5928.

\bibitem{notarstefano_controllability_2013}
------, ``Controllability and {Observability} of {Grid} {Graphs} via
  {Reduction} and {Symmetries},'' \emph{IEEE Trans. Autom. Control}, vol.~58,
  no.~7, pp. 1719--1731, Jul. 2013.

\bibitem{cvetkovic_controllable_2011}
\BIBentryALTinterwordspacing
D.~Cvetkovi{\'c}, P.~Rowlinson, Z.~Stani{\'c}, and M.-G. Yoon, ``{Controllable}
  {Graphs},'' \emph{Bulletin (Acad{\'e}mie serbe des sciences et des arts.
  Classe des sciences math{\'e}matiques et naturelles. Sciences
  math{\'e}matiques)}, no.~36, pp. 81--88, 2011, publisher: Serbian Academy of
  Sciences and Arts.
\BIBentrySTDinterwordspacing

\bibitem{andelic_laplacian_2020}
\BIBentryALTinterwordspacing
M.~An{d}eli{\'c}, M.~Brunetti, and Z.~Stani{\'c},
  ``\BIBforeignlanguage{en}{Laplacian {Controllability} for {Graphs} {Obtained}
  by {Some} {Standard} {Products}},'' \emph{\BIBforeignlanguage{en}{Graphs and
  Combinatorics}}, vol.~36, no.~5, pp. 1593--1602, Sep. 2020.
\BIBentrySTDinterwordspacing

\bibitem{rowlinson_main_2007}
\BIBentryALTinterwordspacing
P.~Rowlinson, ``\BIBforeignlanguage{en}{The main eigenvalues of a graph: {A}
  survey},'' \emph{\BIBforeignlanguage{en}{Appl. Anal. Discrete Math.}},
  vol.~1, no.~2, pp. 455--471, 2007.
\BIBentrySTDinterwordspacing

\bibitem{cvetkovic1980spectra}
\BIBentryALTinterwordspacing
D.~Cvetkovic, D.~Cvetkovi{\'c}, M.~Doob, and H.~Sachs, \emph{Spectra of Graphs:
  Theory and Application}, ser. Pure and applied mathematics : a series of
  monographs and textbooks.\hskip 1em plus 0.5em minus 0.4em\relax Academic
  Press, 1980.
\BIBentrySTDinterwordspacing

\bibitem{lin_structural_1974}
C.-T. Lin, ``Structural controllability,'' \emph{IEEE Trans. Autom. Control},
  vol.~19, no.~3, pp. 201--208, Jun. 1974.

\bibitem{shields_structural_1976}
R.~Shields and J.~Pearson, ``Structural controllability of multiinput linear
  systems,'' \emph{IEEE Trans. Autom. Control}, vol.~21, no.~2, pp. 203--212,
  Apr. 1976.

\bibitem{dion_generic_2003}
\BIBentryALTinterwordspacing
J.-M. Dion, C.~Commault, and J.~van~der Woude,
  ``\BIBforeignlanguage{en}{Generic properties and control of linear structured
  systems: a survey},'' \emph{\BIBforeignlanguage{en}{Automatica}}, vol.~39,
  no.~7, pp. 1125--1144, Jul. 2003.
\BIBentrySTDinterwordspacing

\bibitem{li_resilient_2019}
J.~Li, X.~Chen, S.~Pequito, G.~J. Pappas, and V.~M. Preciado, ``Resilient
  {Structural} {Stabilizability} of {Undirected} {Networks},'' in \emph{2019
  {Amer.} {Control} {Conf.} ({ACC})}, Jul. 2019, pp. 5173--5178, iSSN:
  2378-5861.

\bibitem{BRIO2010175}
\BIBentryALTinterwordspacing
M.~Brio, G.~Webb, and A.~Zakharian, ``Chapter 5 - problems with multiple
  temporal and spatial scales,'' in \emph{Numerical Time-Dependent Partial
  Differential Equations for Scientists and Engineers}, ser. Mathematics in
  Science and Engineering, M.~Brio, A.~Zakharian, and G.~M. Webb, Eds.\hskip
  1em plus 0.5em minus 0.4em\relax Elsevier, 2010, vol. 213, pp. 175--249.
\BIBentrySTDinterwordspacing

\bibitem{crank_mathematics_1979}
\BIBentryALTinterwordspacing
J.~Crank and E.~Crank, \emph{The {Mathematics} of {Diffusion}}, ser. Oxford
  science publications.\hskip 1em plus 0.5em minus 0.4em\relax Clarendon Press,
  1979.
\BIBentrySTDinterwordspacing

\bibitem{elmore2012physics}
\BIBentryALTinterwordspacing
W.~Elmore and M.~Heald, \emph{Physics of Waves}, ser. Dover Books on
  Physics.\hskip 1em plus 0.5em minus 0.4em\relax Dover Publications, 2012.
\BIBentrySTDinterwordspacing

\bibitem{kalman_new_1960}
\BIBentryALTinterwordspacing
R.~E. Kalman, ``A {New} {Approach} to {Linear} {Filtering} and {Prediction}
  {Problems},'' \emph{Journal of Basic Engineering}, vol.~82, no.~1, pp.
  35--45, Mar. 1960.
\BIBentrySTDinterwordspacing

\bibitem{beikzadeh_stability_2009}
H.~Beikzadeh and H.~D. Taghirad, ``Stability analysis of the discrete-time
  difference {SDRE} state estimator in a noisy environment,'' in \emph{2009
  {IEEE} {Int.} {Conf.} {Control} {Autom.}}, Dec. 2009, pp. 1751--1756, iSSN:
  1948-3457.

\bibitem{beikzadeh_exponential_2012}
\BIBentryALTinterwordspacing
------, ``\BIBforeignlanguage{en}{Exponential nonlinear observer based on the
  differential state-dependent {Riccati} equation},''
  \emph{\BIBforeignlanguage{en}{Int. J. Autom. Comput.}}, vol.~9, no.~4, pp.
  358--368, Aug. 2012.
\BIBentrySTDinterwordspacing

\bibitem{meyer2000matrix}
\BIBentryALTinterwordspacing
C.~Meyer, \emph{Matrix Analysis and Applied Linear Algebra}, ser. Other Titles
  in Applied Mathematics.\hskip 1em plus 0.5em minus 0.4em\relax Society for
  Industrial and Applied Mathematics (SIAM, 3600 Market Street, Floor 6,
  Philadelphia, PA 19104), 2000.
\BIBentrySTDinterwordspacing

\bibitem{huang_nonlinear_nodate}
Y.~Huang, ``Nonlinear optimal control: {An} enhanced quasi-{LPV} approach,''
  Ph.D. dissertation, Eng. and Applied Sci., California Inst. of Tech.,
  Pasadena, CA, USA, 1999.

\bibitem{rugh1993linear}
\BIBentryALTinterwordspacing
W.~Rugh, \emph{Linear System Theory}, ser. Prentice-Hall information and system
  sciences series.\hskip 1em plus 0.5em minus 0.4em\relax Prentice Hall, 1993.
\BIBentrySTDinterwordspacing

\end{thebibliography}

\end{document}